\newtheorem{definition}{Definition}[section]
\newtheorem{lemma}[definition]{Lemma}
\newtheorem{theorem}[definition]{Theorem}
\def\squareforqed{\hbox{\rlap{$\sqcap$}$\sqcup$}}
\def\qed{\ifmmode\squareforqed\else{\unskip\nobreak\hfil
\penalty50\hskip1em\null\nobreak\hfil\squareforqed
\parfillskip=0pt\finalhyphendemerits=0\endgraf}\fi}
\def\endenv{\ifmmode\;\else{\unskip\nobreak\hfil
\penalty50\hskip1em\null\nobreak\hfil\;
\parfillskip=0pt\finalhyphendemerits=0\endgraf}\fi}
\newenvironment{proof}{\noindent \textbf{{Proof~} }}{\qed}
\mathchardef\ordinarycolon\mathcode`\:
\def\vcentcolon{\mathrel{\mathop\ordinarycolon}}
\newcommand{\nc}{\newcommand}
\nc{\rnc}{\renewcommand} \nc{\beq}{\begin{equation}}
\nc{\eeq}{{\end{equation}}} \nc{\bea}{\begin{eqnarray}}
\nc{\eea}{\end{eqnarray}} \nc{\beqa}{\begin{eqnarray}}
\nc{\eeqa}{\end{eqnarray}} \nc{\lbar}[1]{\overline{#1}}
\nc{\bra}[1]{\langle#1|} \nc{\ket}[1]{|#1\rangle}
\nc{\ketbra}[2]{|#1\rangle\!\langle#2|}
\nc{\braket}[2]{\langle#1|#2\rangle} \nc{\proj}[1]{|
#1\rangle\!\langle #1 |} \nc{\avg}[1]{\langle#1\rangle}
\rnc{\max}{\operatorname{max}} \nc{\rank}{\operatorname{rank}}
\nc{\conv}{\operatorname{conv}}
\nc{\smfrac}[2]{\mbox{$\frac{#1}{#2}$}} \nc{\Tr}{\operatorname{Tr}}
\nc{\ox}{\otimes} \nc{\dg}{\dagger} \nc{\dn}{\downarrow}
\nc{\cA}{{\cal A}} \nc{\cB}{{\cal B}} \nc{\cC}{{\cal C}}
\nc{\cD}{{\cal D}} \nc{\cE}{{\cal E}} \nc{\cF}{{\cal F}}
\nc{\cG}{{\cal G}} \nc{\cH}{{\cal H}} \nc{\cI}{{\cal I}}
\nc{\cJ}{{\cal J}} \nc{\cK}{{\cal K}} \nc{\cL}{{\cal L}}
\nc{\cM}{{\cal M}} \nc{\cN}{{\cal N}} \nc{\cO}{{\cal O}}
\nc{\cP}{{\cal P}} \nc{\cR}{{\cal R}} \nc{\cS}{{\cal S}}
\nc{\cT}{{\cal T}} \nc{\cU}{{\cal U}} \nc{\cV}{{\cal V}}
\nc{\cX}{{\cal X}} \nc{\cW}{{\cal W}} \nc{\cZ}{{\cal Z}}
\nc{\csupp}{{\operatorname{csupp}}}
\nc{\qsupp}{{\operatorname{qsupp}}} \nc{\rar}{\rightarrow}
\nc{\lrar}{\longrightarrow} \nc{\poly}{\operatorname{poly}}
\nc{\polylog}{\operatorname{polylog}} \nc{\Lip}{\operatorname{Lip}}
\nc{\supp}{{\operatorname{supp}}}
\def\>{\rangle}
\def\<{\langle}
\def\a{\alpha}
\def\b{\beta}
\def\d{\delta}
\def\e{\epsilon}
\def\l{\lambda}
\def\r{\rho}
\def\s{\sigma}
\def\ph{\varphi}
\def\ps{\psi}
\def\G{\Gamma}
\def\Ph{\Phi}
\def\O{\Omega}
\nc{\glneq}{{\raisebox{0.6ex}{$>$}  \hspace*{-1.8ex} \raisebox{-0.6ex}{$<$}}}
\nc{\gleq}{{\raisebox{0.6ex}{$\geq$}\hspace*{-1.8ex} \raisebox{-0.6ex}{$\leq$}}}
\nc{\RR}{{{\mathbb R}}}
\nc{\CC}{{{\mathbb C}}}
\nc{\FF}{{{\mathbb F}}}
\nc{\HH}{{{\mathbb H}}}
\nc{\NN}{{{\mathbb N}}}
\nc{\ZZ}{{{\mathbb Z}}}
\nc{\PP}{{{\mathbb P}}}
\nc{\QQ}{{{\mathbb Q}}}
\nc{\UU}{{{\mathbb U}}}
\nc{\WW}{{{\mathbb W}}}
\nc{\EE}{{{\mathbb E}}}
\rnc{\SS}{{{\mathbb S}}}
\nc{\id}{{\operatorname{id}}}
\nc{\vholder}[1]{\rule{0pt}{#1}}
\nc{\ob}[1]{#1}
\def\beq{\begin {equation}}
\def\eeq{\end {equation}}
\nc{\eq}[1]{Eq.~(\ref{eq:#1})} \nc{\eqs}[2]{Eqs.~(\ref{eq:#1}) and
(\ref{eq:#2})}
\nc{\eqn}[1]{Eq.~(\ref{eqn:#1})}
\nc{\eqns}[2]{Eqs.~(\ref{eqn:#1}) and (\ref{eqn:#2})}
\nc{\Hm}{{H^{min}}} \nc{\Hpm}{{H_p^{min}}} \rnc{\log}{\ln}
\nc{\Wg}{\operatorname{Wg}}
\begin{document}

\title{{\Large The maximal $p$-norm multiplicativity conjecture is false}}

\author{Patrick Hayden}
 \email{patrick@cs.mcgill.ca}
 \affiliation{
    School of Computer Science,
    McGill University,
    Montreal, Quebec, H3A 2A7, Canada
    }

\date{July 22, 2007}

\begin{abstract}
For all $1 < p < 2$, we demonstrate the existence of quantum
channels with non-multiplicative maximal $p$-norms. Equivalently,
the minimum output Renyi entropy of order $p$ of a quantum channel
is not additive for all $1 < p < 2$. The violations found are large.
As $p$ approaches 1, the minimum output Renyi entropy of order $p$
for a product channel need not be significantly greater than the
minimum output entropy of its individual factors. Since $p=1$
corresponds to the von Neumann entropy, these counterexamples
demonstrate that if the additivity conjecture of quantum information
theory is true, it cannot be proved as a consequence of maximal
$p$-norm multiplicativity.
\end{abstract}

\maketitle

\parskip .75ex


\section{Introduction} \label{sec:intro}

The oldest problem of quantum information theory is arguably to
determine the capacity of a quantum-mechanical communications
channel for carrying information, specifically ``classical''
\emph{bits} of information. (Until the 1990's it would have been
unnecessary to add that additional qualification, but today the
field is equally concerned with other forms of information like
\emph{qubits} and \emph{ebits} that are fundamentally
quantum-mechanical.) The classical capacity problem long predates
the invention of quantum source coding~\cite{Schumacher95,JozsaS94}
and was of concern to the founders of information theory
themselves~\cite{P73}.
The first major result on the problem came with the resolution of a
conjecture of Gordon's~\cite{G64} by Alexander Holevo in 1973, when
he published the first proof~\cite{H73} that the
maximum amount of information that can be extracted from an ensemble
of states $\r_i$ occurring with probabilities $p_i$ is bounded above
by
\begin{equation}
\chi(\{p_i,\r_i\})
 = H\Big( \sum_i p_i \r_i \Big) - \sum_i p_i H(\r_i),
\end{equation}
where $H(\r) = -\Tr \r \ln \r$ is the von Neumann entropy of the
density operator $\r$. For a quantum channel $\cN$, one can then
define the Holevo capacity
\begin{equation}
\chi(\cN) = \max_{\{p_i,\r_i\}} \chi( \{ p_i, \cN(\r_i) \} ),
\end{equation}
where the maximization is over all ensembles of input states.
Writing $C(\cN)$ for the classical capacity of the channel $\cN$,
this leads easily to an upper bound of
\begin{equation} \label{eqn:c.chi}
C(\cN) \leq \lim_{n \rar \infty} \frac{1}{n} \chi(\cN^{\ox n}).
\end{equation}
It then took more than two decades for further substantial progress
to be made on the problem, but in 1996, building on recent
advances~\cite{HausladenJSWW96}, Holevo~\cite{H98} and
Schumacher-Westmoreland~\cite{SW97} managed to show that the upper
bound in Eq.~(\ref{eqn:c.chi}) is actually achieved. This was a
resolution of sorts to the capacity problem, but the limit in the
equation makes it in practice extremely difficult to evaluate. If
the codewords used for data transmission are restricted such that
they are not entangled across multiple uses of the channel, however,
the resulting \emph{product state capacity} $C_{1\infty}(\cN)$ has
the simpler expression
\begin{equation}
C_{1\infty}(\cN) = \chi(\cN).
\end{equation}
The additivity conjecture for the Holevo capacity asserts that
for all channels $\cN_1$ and $\cN_2$,
\begin{equation}
\chi(\cN_1 \ox \cN_2) = \chi(\cN_1) + \chi(\cN_2).
\end{equation}
This would imply, in particular, that $C_{1\infty}(\cN) = C(\cN)$,
or that entangled codewords do not increase the classical capacity
of a quantum channel.

In 2003, Peter Shor~\cite{Sh02}, building on several previously
established
connections~\cite{Pomeransky03,AudenaertB04,MatsumotoSW04},
demonstrated that the additivity of the Holevo capacity, the
additivity of the entanglement of
formation~\cite{BennettDSW96,HaydenHT01,VidalDC02,MatsumotoY04} and
the superadditivity of the entanglement of formation~\cite{VW01} are
all equivalent to yet another conjecture, known as the \emph{minimum
entropy output conjecture}~\cite{KingR01}, which is particularly
simple to express mathematically.  For a channel $\cN$, define
\begin{equation} \label{eqn:min.entropy}
\Hm(\cN) = \min_{\ket{\ph}} H( \cN( \ph ) ),
\end{equation}
where the minimization is over all pure input states $\ket{\ph}$.
The minimum entropy output conjecture asserts that for all channels
$\cN_1$ and $\cN_2$,
\begin{equation} \label{eqn:add}
\Hm(\cN_1 \ox \cN_2) = \Hm(\cN_1) + \Hm(\cN_2).
\end{equation}

There has been a great deal of previous work on these conjectures,
particularly numerical searches for counterexamples, necessarily in
low dimension, at Caltech, IBM, IMaPh and by ERATO
researchers~\cite{OsawaN01}, as well as proofs of many special
cases. For example, the minimum entropy output conjecture has been
shown to hold if one of the channels is the identity
channel~\cite{AmosovHW00,AmosovH01}, a unital qubit
channel~\cite{K01}, a generalized depolarizing
channel~\cite{FujiwaraH02,K03} or an entanglement-breaking
channel~\cite{Holevo98a,K01b,Sh02}. In addition, the weak additivity
conjecture was confirmed for degradable channels~\cite{DevetakS05},
their conjugate channels~\cite{KingMNR05} and some other special
classes of
channels~\cite{Cortese04,MatsumotoY04,DattaHS04,Fukuda05}. Further
evidence for qubit channels was supplied in~\cite{KingR01}. This
list is by no means exhaustive. The reader is directed to Holevo's
reviews for a detailed account of the history of the additivity
problem~\cite{Holevo04,Holevo07}.

For the past several years, the most commonly used strategy for
proving these partial results has been to demonstrate the
multiplicativity of maximal $p$-norms of quantum channels for $p$
approaching 1~\cite{AmosovHW00}. For a quantum channel $\cN$ and $p
> 1$, define the maximal $p$-norm of $\cN$ to be
\begin{equation}
\nu_p(\cN) = \sup\Big\{ \big\| \cN(\r) \big\|_p \, ; \r \geq 0, \,
\Tr \r =1 \Big\}.
\end{equation}
In the equation, $\| \s \|_p = \big( \Tr |\s|^p \big)^{1/p}$. The
\emph{maximal $p$-norm multiplicativity
conjecture}~\cite{AmosovHW00} asserts that for all quantum channels
$\cN_1$ and $\cN_2$,
\begin{equation} \label{eqn:p.norm.conjecture}
\nu_p(\cN_1 \ox \cN_2) = \nu_p(\cN_1)\nu_p(\cN_2).
\end{equation}
This can
be re-expressed in an equivalent form more convenient to us using Renyi
entropies. Define the Renyi entropy of order $p$ to be
\begin{equation}
H_p(\rho) = \frac{1}{1-p} \log \Tr \rho^p
\end{equation}
for $p>0$, $p \neq 1$. Since $\lim_{p \downarrow 1} H_p(\rho) =
H(\r)$, we will also define $H_1(\r)$ to be $H(\r)$. All these
entropies have the property that they are 0 for pure states and
achieve their maximum value of the logarithm of the dimension on
maximally mixed states. Define the minimum output Renyi entropy
$\Hpm$ by substituting $H_p$ for $H$ in Eq.~(\ref{eqn:min.entropy}).
Eq.~(\ref{eqn:p.norm.conjecture}) can then be written equivalently
as
\begin{equation} \label{eqn:p.add}
\Hpm(\cN_1 \ox \cN_2) = \Hpm(\cN_1) + \Hpm(\cN_2),
\end{equation}
which underscores the fact that the maximal $p$-norm
multiplicativity conjecture is a natural strengthening of the
original minimum entropy output conjecture (\ref{eqn:add}).

This conjecture spawned a significant literature of its own which we
will not attempt to summarize. Holevo's reviews are again an
excellent source~\cite{Holevo04,Holevo07}. Some more recent
important references
include~\cite{KingR04,KingR05,SerafiniEW05,DevetakJKR06,Michalakis07}.
Unlike the von Neumann entropy case, however, some counterexamples
had already been found prior to this paper. Namely, Werner and
Holevo found a counterexample to Eq. (\ref{eqn:p.add}) for $p >
4.79$~\cite{WH02} that nonetheless doesn't violate the $p$-norm
multiplicativity conjecture for $1 < p <2$~\cite{Datta04}, and very
recently Winter showed that the conjecture is false for all $p >
2$~\cite{W07}. In light of these developments, the standing
conjecture was that the maximal $p$-norm multiplicativity held for
$1 \leq p \leq 2$, corresponding to the region in which the map $X
\mapsto X^p$ is operator convex~\cite{KingR04}. More conservatively,
it was conjectured to hold at least in an open interval $(1,1+\e)$,
which would be sufficient to imply the minimum entropy output
conjecture.

On the contrary, we will show that the conjecture is false for all
$1 < p < 2$. In particular, given $1 < p < 2$, we show that there
exist channels $\cN_1$ and $\cN_2$ with output dimension $d$ such
that both $\Hpm(\cN_1)$ and $\Hpm(\cN_2)$ are equal to $\log d -
\cO(1)$ but $\Hpm( \cN_1 \ox \cN_2 ) = p \log d + \cO(1)$. Thus,
\begin{equation}
 \Hpm( \cN_1 ) + \Hpm( \cN_2 ) - \Hpm( \cN_1 \ox \cN_2 )
 = (2-p) \log d - \cO(1).
\end{equation}
For $p$ close to 1, one finds that the minimum entropy output of the
product channel need not be significantly larger than the minimum
output entropy of the individual factors.
Since~\cite{AmosovHW00,K03}
\begin{equation} \label{eqn:king}
\Hpm( \cN_1 \ox \cN_2 ) \geq \Hpm( \cN_1 ) = \log d - \cO(1),
\end{equation}
these counterexamples are
essentially the strongest possible for $p$ close to $1$.

At $p=1$ itself, however, we see no evidence of a violation of the
additivity conjecture for the channels we study. Thus, the
conjecture stands and it is still an open question whether entangled
codewords can increase the classical capacity of a quantum channel.

{\bf Notation:} If $A$ and $B$ are finite dimensional Hilbert
spaces, we write $AB \equiv A\otimes B$ for their tensor product and
$|A|$ for $\dim A$. The Hilbert spaces on which linear operators act
will be denoted by a superscript.  For instance, we write $\ph^{AB}$
for a density operator on $AB$. Partial traces will be abbreviated
by omitting superscripts, such as $\ph^A \equiv \Tr_B\ph^{AB}$.  We
use a similar notation for pure states, e.g.\ $\ket{\psi}^{AB}\in
AB$, while abbreviating $\psi^{AB} \equiv \proj{\psi}^{AB}$. We
associate to any two isomorphic Hilbert spaces $A\simeq A'$ a
\emph{unique} maximally entangled state which we denote
$\ket{\Phi}^{AA'}$.  Given any orthonormal basis $\{\ket{i}^A\}$ for
$A$,  if we define $\ket{i}^{A'} = V\ket{i}^A$ where $V$ is the
associated isomorphism, we  can write this state as $\ket{\Ph}^{AA'}
= |A|^{-1/2}\sum_{i=1}^{|A|}\ket{i}^A\ket{i}^{A'}$.
We will also make use of the asymptotic notation $f(n) = \cO(g(n))$
if there exists $C > 0$ such that for sufficiently large $n$,
$|f(n)| \leq C g(n)$. $f(n) = \Omega(g(n))$ is defined similarly but
with the reverse inequality $|f(n)| \geq C g(n)$. Finally, $f(n) =
\Theta(g(n))$ if $f(n) = \cO(g(n))$ and $f(n) = \O(g(n))$.

\section{The counterexamples} \label{sec:counterexamples}

Let $E$, $F$ and $G$ be finite dimensional quantum systems, then
define $R=E$, $S=FG$, $A=EF$ and $B=G$, so that $RS = AB = EFG$.
Our counterexamples will be channels from $S$ to $A$ of the form
\begin{equation} \label{eqn:the.channel}
\cN( \r ) = \Tr_B \big[ U ( \proj{0}^R \ox \r ) U^\dg \big]
\end{equation}
for $U$ unitary and $\ket{0}$ some fixed state on $R$. Our method
will be to fix the dimensions of the systems involved, select $U$ at
random, and show that the resulting channel is likely to violate
additivity. The rough intuition motivating our examples will be to
exploit the fact that there are channels that appear to be highly
depolarizing for product state inputs despite the fact that they are
not close to the depolarizing channel in, for example, the norm of
complete boundedness~\cite{Paulsen86}.

Consider a single copy of $\cN$ and the associated map $\ket{0}^R
\ket{\ph}^S \mapsto U \ket{0}^R \ket{\ph}^S$. This map takes $S$ to
a subspace of $A \ox B$, and if $U$ is selected according to the
Haar measure, then the image of $S$ is itself a random subspace,
distributed according to the unitarily invariant measure. In
\cite{HLW06}, it was shown that if $|S|$ is chosen appropriately,
then the image is likely to contain only almost maximally entangled
states, as measured by the entropy of entanglement. After tracing
over $B$, this entropy of entanglement becomes the entropy of the
output state. Thus, for $S$ of suitable size, all input states get
mapped to high entropy output states. We will repeat the analysis
below, finding that the maximum allowable size of $S$ will depend on
$p$ as described by the following lemma:
\begin{lemma} \label{lem:size.subspace}
Let $A$ and $B$ be quantum systems with $2 \leq |A| \leq |B|$ and $1 < p
< 2$. Then there exists a subspace $S \subset A \ox B$ of dimension
\begin{equation}
 |S| =  \left\lfloor
            \frac{ \G_p|A|^{2-p}|B| \a^{2.5}}{p-1}
        \right\rfloor,
\end{equation}
with $\G_p > 0$ a constant, that contains only states $\ket{\ph} \in
S$ with high entanglement, in the sense that
\begin{equation} \label{eqn:good.entropy}
H_p(\ph^A) \geq \log |A| - \a - \b,
\end{equation}
where $\b = 2|A| / |B|$. The probability that a subspace of
dimension $|S|$ chosen at random according to the unitarily
invariant measure will not have this property is bounded above by
\begin{equation}
 \left( \frac{|A|^{(p-1)/2}}{\a} \right)^{2|S|}
    \exp \left( - \frac{ (2 |A||B|-1)\a^2}{2|A|^{p-1}} \right).
\end{equation}
\end{lemma}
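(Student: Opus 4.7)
The plan is the standard ``concentration plus $\e$-net'' strategy used in \cite{HLW06} to prove the $p=1$ analogue of this statement, adapted here to Rényi $p$-norms. Using $H_p(\r) = \frac{p}{1-p}\log\|\r\|_p$, I would first convert the claim into a uniform upper bound on $f(\ket{\ph}) := \|\ph^A\|_p$ over the unit sphere of $S$: the inequality $H_p(\ph^A) \geq \log|A| - \a - \b$ is equivalent to $f(\ket\ph) \leq |A|^{(1-p)/p}\exp((p-1)(\a+\b)/p)$. By unitary invariance, with $U$ Haar-random it suffices to control $f(U\ket{\ph})$ uniformly over unit vectors $\ket{\ph}$ in a fixed reference subspace $S_0$ of dimension $|S|$.

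For a Haar-random unit $\ket{\ph} \in AB$, I would bound $\EE f$ by routing through the $p=2$ moment. A swap computation gives $\EE\,\Tr(\ph^A)^2 = (|A|+|B|)/(|A||B|+1) \leq (1+\b/2)/|A|$. The power-mean inequality $\|\r\|_p \leq |A|^{1/p-1/2}\|\r\|_2$ (valid since $p\leq 2$) together with Jensen's inequality on $\sqrt{\cdot}$ then yields $\EE f \leq |A|^{-(p-1)/p}(1+\b/2)^{1/2}$, so the mean already lies within the $\b$-slack of the target. Next, $f$ is Lipschitz with constant at most $2$ on the sphere: the triangle inequality for $\|\cdot\|_p$, monotonicity of trace distance under partial trace, $\|\cdot\|_p \leq \|\cdot\|_1$ for $p \geq 1$, and $\|\proj{\ph} - \proj{\ps}\|_1 \leq 2\|\ket{\ph}-\ket{\ps}\|$ chain together. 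Levy's lemma on the real sphere of dimension $2|A||B|-1$ then produces a Gaussian tail bound $\Pr[f > \EE f + t] \leq \exp(-c(2|A||B|-1)t^2)$ for a universal constant $c$ absorbing the factor $\eta^2$.

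The subspace step is then the familiar net argument. I would fix an $\e$-net $\cM$ on the unit sphere of $S_0$ with $|\cM| \leq (5/\e)^{2|S|}$. Applying the concentration bound to each $U\ket{\ph_0}$ for $\ket{\ph_0} \in \cM$ and union-bounding controls $\sup_{\ket{\ph_0} \in \cM} f(U\ket{\ph_0})$; the Lipschitz property of $f$ combined with unitary invariance of the Euclidean norm extends this to the entire unit sphere of $S_0$, at the cost of an extra $2\e$ in the value of $f$. Choosing $t = \e = \a/|A|^{(p-1)/2}$ matches both the claimed exponential factor $\exp(-(2|A||B|-1)\a^2/(2|A|^{p-1}))$ and the net factor $(|A|^{(p-1)/2}/\a)^{2|S|}$. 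Translating the resulting bound on $\|\ph^A\|_p$ back to $H_p$ via $\log(1+x) \leq x$ and the Jacobian factor $p/(p-1)$ produces the $\a$ loss in the entropy, while the $\b$ loss comes from the expectation bound.

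The main technical obstacle is synchronizing the several $p$-dependent scales: the power $|A|^{(p-1)/2}$ in the net accuracy and Levy deviation, the power $|A|^{-(p-1)/p}$ in the expectation, the prefactor $p/(p-1)$ in the norm-to-entropy conversion, and the $(1+\b/2)^{1/2}$ correction from finite $|B|$ all interact, and a lazy choice at any step loses the quantitative estimate. These constraints conspire to dictate that the admissible $|S|$ scale as $|A|^{2-p}|B|\a^{2.5}/(p-1)$ rather than the naive $|A||B|\a^2$ one might guess: an extra factor $\a^{1/2}/(p-1)$ is spent to dominate the logarithm in the net cardinality uniformly across $p \in (1,2)$.
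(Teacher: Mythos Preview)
Your overall strategy (concentration of measure plus $\e$-net on a random subspace) is exactly the one used in the paper, and the expectation estimate via $H_2$ is fine. The gap is in the Lipschitz step: working with $f(\ket\ph)=\|\ph^A\|_p$ and its dimension-free Lipschitz constant $2$ does \emph{not} reproduce the stated bounds.

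Concretely, the entropy target $H_p(\ph^A)\geq\log|A|-\a-\b$ translates into $\|\ph^A\|_p\leq |A|^{-(p-1)/p}e^{(p-1)(\a+\b)/p}$, so the deviation you are allowed between $\EE f$ and the target is at most of order $|A|^{-(p-1)/p}(p-1)\a$. Your choice $t=\a/|A|^{(p-1)/2}$ is \emph{larger} than this by a factor $|A|^{(p-1)(2-p)/(2p)}$, which diverges with $|A|$ for every $p\in(1,2)$; hence the event $\{f\leq \EE f+t\}$ does not imply the claimed entropy lower bound, and the exponential and net factors you wrote down, while formally matching the lemma, do not control the right event. If instead you use the admissible $t\sim |A|^{-(p-1)/p}(p-1)\a$, Levy's lemma yields an exponent of order $|A|^{(2-p)/p}|B|\,\a^2(p-1)^2$ rather than $|A|^{2-p}|B|\,\a^2$, and the resulting subspace dimension is smaller by a factor $|A|^{(p-1)(2-p)/p}$ than the one asserted.

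The paper avoids this loss by bounding the Lipschitz constant of $H_p$ itself, not of $\|\cdot\|_p$: a direct gradient computation on $g_p(\ket\ph)=H_p\big(\sum_j\bra{j}\ph^A\ket{j}\big)$ gives
\[
|\nabla g_p|^2=\frac{4p^2}{(p-1)^2}\,\frac{\sum_j q_j^{2p-1}}{(\sum_j q_j^p)^2}\;\leq\;\frac{4p^2}{(p-1)^2}\,|A|^{p-1},
\]
and Schur concavity transfers this to $H_p(\ph^A)$. The point is that the naive chain rule $\mathrm{Lip}(H_p)\leq \sup|g'|\cdot\mathrm{Lip}(\|\cdot\|_p)$, with $g(x)=\tfrac{p}{1-p}\log x$, evaluates $|g'|$ at $x\approx|A|^{-(p-1)/p}$ and produces $\tfrac{2p}{p-1}|A|^{(p-1)/p}$, whereas the direct calculation yields only $\tfrac{2p}{p-1}|A|^{(p-1)/2}$. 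That improvement in the exponent is precisely what is needed for the subspace dimension $|S|\sim|A|^{2-p}|B|$, and it is not recoverable from the $\|\cdot\|_p$ route.
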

\begin{proof}
The argument is nearly identical to the proof of Theorem IV.1 in
\cite{HLW06} so we will only discuss the differences here, referring
the reader to the original paper to complete the argument. The first
ingredient is the estimate $\EE H_p( \ph^A ) \geq \EE H_2( \ph^A)
\geq \ln |A| - |A|/|B|$, where the expectation is over random pure
states on $A\ox B$~\cite{L78}. All that is required in addition is
an upper bound on the Lipschitz constant of the maps $f_p(\ket{\ph})
= H_p( \ph^A )$ for $p>1$. Let $\ket{\ph} = \sum_{jk} \ph_{jk}
\ket{jk}^{AB}$. We begin, as in \cite{HLW06}, by bounding the
Lipschitz constant of $g_p(\ket{\ph}) = H_p( \sum_j \bra{j} \ph^A
\ket{j} )$. For $q_j = \sum_k |\ph_{jk}|^2$,
\begin{eqnarray}
 \nabla g_p \cdot \nabla g_p
 &=& \frac{4 p^2}{(1-p)^2} \frac{\sum_j q_j^{2p-1}}{(\sum_j q_j^p)^2}
 \leq \frac{4 p^2}{(1-p)^2} \frac{1}{\sum_j q_j^p}
 \leq \frac{4 p^2}{(1-p)^2} |A|^{p-1} \label{eqn:lipschitz.p},
\end{eqnarray}
using the facts that $q_j^{2p-1} \leq q_j^p$ for $p > 1$ in the
first inequality and that $\sum_j q_j^p$ is minimized by the uniform
distribution in the second. Eq.~(\ref{eqn:lipschitz.p}) therefore
provides an upper bound on the square of the Lipschitz constant of
$g_p$. The Schur concavity of the Renyi entropies then ensures that
the same argument as was used in~\cite{HLW06} can be used to upper
bound the Lipschitz constant of $f_p$ by that of $g_p$.
\end{proof}

Now consider the product channel $\cN \ox \bar{\cN}$, where
$\bar\cN(\r) = \Tr_B \big[ \bar U (\proj{0} \ox \r) U^T \big]$. We
will exploit a form of the same symmetry as
Werner-Holevo~\cite{WH02} and Winter~\cite{W07} did, but instead of
using an exact symmetry that occurs only rarely, we'll use an
approximate version of it that holds always. In the trivial case
where $|R| = 1$, the identity $U \ox \bar{U} \ket{\Ph} = (U
\bar{U}^T \ox I) \ket{\Ph} = \ket{\Ph}$ for the maximally entangled
state $\ket{\Ph}^{S_1 S_2}$ implies that
\begin{equation}
(\cN \ox \bar{\cN})(\proj{\Ph}^{S_1 S_2})
 = \Tr_{B_1 B_2} \left[ \proj{\Ph}^{A_1 A_2} \ox \proj{\Ph}^{B_1 B_2} \right]
 = \proj{\Ph}^{A_1 A_2}.
\end{equation}
The output of $\cN \ox \bar{\cN}$ will thus be a pure state. In the
general case, we will choose $R$ to be small but not trivial, in
which case useful bounds can still be placed on the largest
eigenvalue of the output state for an input state maximally
entangled between $S_1$ and $S_2$.
\begin{lemma} \label{lem:big.eigenvalue}
Let $\ket{\Ph}^{S_1 S_2}$ be a state maximally entangled between
$S_1$ and $S_2$ as in the previous paragraph. Then $(\cN \ox
\bar\cN)(\Ph^{S_1S_2})$ has an eigenvalue of at least
$\smfrac{|S|}{|A||B|}$.
\end{lemma}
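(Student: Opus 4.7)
My strategy is to lower bound the largest eigenvalue of $\sigma := (\cN\ox\bar\cN)(\Ph^{S_1S_2})$ by the overlap $\bra{\Ph^{A_1A_2}}\sigma\ket{\Ph^{A_1A_2}}$ with the canonical maximally entangled state on $A_1A_2$. This test vector is natural because in the trivial case $|R|=1$ already noted in the text, $\sigma$ is exactly $\proj{\Ph^{A_1A_2}}$, so for small but non-trivial $|R|$ most of that overlap should survive.

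The first step is to rewrite the purification of $\sigma$ in a usable form. Using the elementary identity
\[
\ket{0}^{R_1}\ket{0}^{R_2}\ket{\Ph^{S_1S_2}} = \sqrt{|R|}\,(\proj{0}^{R_1}\ox I)\ket{\Ph^{R_1S_1,R_2S_2}}
\]
(which follows from the perfect correlation of $\ket{\Ph^{R_1S_1,R_2S_2}}$ across the two copies of $R$), together with the ricochet identity $(U\ox\bar U)\ket{\Ph^{R_1S_1,R_2S_2}} = \ket{\Ph^{A_1B_1,A_2B_2}}$, one obtains
\beq
\ket{\Psi}^{A_1B_1A_2B_2} := (U\ox\bar U)\ket{0}^{R_1}\ket{0}^{R_2}\ket{\Ph^{S_1S_2}} = \sqrt{|R|}\,(P\ox I^{A_2B_2})\ket{\Ph^{A_1B_1,A_2B_2}},
\eeq
where $P := U(\proj{0}^R\ox I^S)U^\dg$ is a rank-$|S|$ projector on $A_1B_1$ and $\sigma = \Tr_{B_1B_2}\proj{\Psi}$.

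The second step is a direct calculation of the overlap. After reordering of tensor factors, $\ket{\Ph^{A_1B_1,A_2B_2}} = \ket{\Ph^{A_1A_2}}\ket{\Ph^{B_1B_2}}$, and using the partial-trace identity $\sum_a\bra{a}^{A_1}P\ket{a,b}^{A_1B_1} = (\Tr_{A_1}P)\ket{b}^{B_1}$, the squared norm of the $B_1B_2$ vector $(\bra{\Ph^{A_1A_2}}\ox I)\ket{\Psi}$ collapses to
\beq
\bra{\Ph^{A_1A_2}}\sigma\ket{\Ph^{A_1A_2}} = \frac{|R|}{|A|^2|B|}\Tr\bigl[(\Tr_{A_1}P)^2\bigr].
\eeq
The final step is Cauchy--Schwarz: since $\Tr_{A_1}P$ is a positive operator on a $|B|$-dimensional space with trace $\Tr P = |S|$, one has $\Tr[(\Tr_{A_1}P)^2]\geq|S|^2/|B|$. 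Combining with the identity $|R||S|=|A||B|$ delivers the claimed bound $|S|/(|A||B|)$ on the largest eigenvalue of $\sigma$. The only real obstacle is keeping track of tensor-factor orderings and complex conjugates in the various maximally-entangled-state identities; the computation itself is routine once the purification has been cast in the form $\sqrt{|R|}(P\ox I)\ket{\Ph}$.
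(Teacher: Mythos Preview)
Your argument is correct. Both your proof and the paper's compute the same overlap $\bra{\Ph^{A_1A_2}}(\cN\ox\bar\cN)(\Ph^{S_1S_2})\ket{\Ph^{A_1A_2}}$ and both ultimately rest on the $(U\ox\bar U)$-invariance of the maximally entangled state, but the route is genuinely different. The paper exploits the specific factorisation $R=E$, $A=EF$, $B=G$: it writes $\Ph^{A_1A_2}=\Ph^{E_1E_2}\ox\Ph^{F_1F_2}$ and then uses the operator inequality $I^{B_1B_2}=I^{G_1G_2}\geq\Ph^{G_1G_2}$ to complete to the full maximally entangled state on $EFG$, after which the $U$'s cancel and one reads off $1/|E|=|S|/(|A||B|)$ directly. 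Your approach instead abstracts away the $E,F,G$ structure entirely: you rewrite the purification as $\sqrt{|R|}\,(P\ox I)\ket{\Ph^{A_1B_1,A_2B_2}}$ with $P$ the rank-$|S|$ projector onto the image subspace, obtain the \emph{exact} overlap $\tfrac{|R|}{|A|^2|B|}\Tr[(\Tr_{A}P)^2]$, and then invoke Cauchy--Schwarz on $\Tr_{A}P$. What the paper's argument buys is brevity and a one-line operator-inequality step; what yours buys is a formula valid for any isometric channel with environment dimension $|R|$ (no need for the auxiliary $E,F,G$ split), together with an exact expression for the overlap that is in principle sharper than the bound whenever $\Tr_{A}P$ fails to be a multiple of the identity.
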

\begin{proof}
This is an easy calculation again exploiting the $U\ox\bar U$
invariance of the maximally entangled state. Recall that $R=E$, $S =
FG$, $A=EF$ and $B=G$:
\begin{align}
\,^{A_1 A_2} \bra{\Ph} & (\cN \ox \bar\cN )(\Ph^{S_1S_2})
\ket{\Ph}^{A_1 A_2} \\
 &= \Tr\left[ (\Ph^{A_1A_2} \ox I^{B_1 B_2})( U \ox \bar U )
    (\proj{00}^{R_1R_2} \ox \Ph^{S_1 S_2})(U^\dg \ox U^T) \right] \\
 &\geq \Tr\left[ (U^\dg \ox U^T) (\Ph^{E_1E_2} \ox \Ph^{F_1F_2} \ox \Ph^{G_1 G_2})( U \ox \bar U )
    (\proj{00}^{E_1E_2} \ox \Ph^{F_1 F_2} \ox \Ph^{G_1G_2}) \right] \\
 &= \Tr\left[ (\Ph^{E_1E_2} \ox \Ph^{F_1F_2} \ox \Ph^{G_1 G_2})
    (\proj{00}^{E_1E_2} \ox \Ph^{F_1 F_2} \ox \Ph^{G_1G_2}) \right]
    \\
 &= \frac{1}{|E|} = \frac{|S|}{|A||B|}.
\end{align}
Note that $U$ acts on $E_1 F_1 G_1$ and $\bar{U}$ on $E_2 F_2 G_2$.
In the third line we have used the operator inequality $\Ph^{G_1G_2}
\leq I^{G_1G_2}$ and the cyclic property of the trace.
\end{proof}

In order to demonstrate violations of additivity, the first step is
to bound the minimum output entropy from below for a single copy of
the channel. Fix $1 < p < 2$, let $|B| = 2|A|$ so that $\b = 1$, set
$\a=1$, and then choose $|S|$ according to Lemma
\ref{lem:size.subspace}. With probability approaching 1 as $|A| \rar
\infty$, when $U$ is chosen according to the Haar measure,
\begin{equation} \label{eqn:Hpm.1copy}
\Hpm( \cN ) \geq \ln |A| - 2.
\end{equation}
The same obviously holds for $\Hpm( \bar\cN )$. Recall that the
entropy of the uniform distribution is $\ln |A|$ so the minimum
entropy is nearly maximal.

On the other hand, by Lemma \ref{lem:big.eigenvalue},
\begin{equation} \label{eqn:renyi.bound}
H_p\big( ( \cN \ox \bar\cN)(\Ph) )
    \leq \frac{1}{1-p} \ln \left(\frac{|S|}{|A||B|}\right)^p
    = \frac{p}{1-p} \ln \frac{|S|}{|A||B|}.
\end{equation}
Substituting the same value of $|S|$ into this inequality yields
\begin{equation} \label{eqn:Hpm.2copies}
 H_p\big( (\cN \ox \bar\cN)(\Ph) \big)
 \leq
 p \log |A| + \cO(1).
\end{equation}
Since $p < 2$, the Renyi entropy of $(\cN \ox \bar\cN)(\Ph)$ is
strictly less than $\Hpm(\cN) + \Hpm(\bar\cN) \geq 2 \ln |A| -
\cO(1)$, where the last inequality holds with high probability. This
is a violation of conjecture (\ref{eqn:p.add}), with the size of the
gap approaching $\ln |A| - \cO(1)$ as $p$ tends to 1.

\begin{theorem} \label{thm:p.counterexamples}
For all $1 < p < 2$, there exists a quantum channel for which the
inequalities (\ref{eqn:Hpm.1copy}) and (\ref{eqn:Hpm.2copies}) both
hold. The inequalities are inconsistent with the maximal $p$-norm
multiplicativity conjecture.
\end{theorem}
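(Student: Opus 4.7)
The plan is a direct application of the probabilistic method, using the two lemmas already established as essentially black boxes. I construct $\cN$ by the Stinespring form Eq.~(\ref{eqn:the.channel}) with $U$ drawn from the Haar measure on unitaries of $RS \simeq AB$, and build $\bar{\cN}$ from $\bar{U}$. Dimensions are specialized as in the preceding discussion: $|B| = 2|A|$ (so $\b = 1$), $\a = 1$, and $|S|$ is the floor expression of Lemma \ref{lem:size.subspace}, which with these parameters scales as $|S| = \Theta(|A|^{3-p})$. The image of the isometry $\ket{\ph}^S \mapsto U(\ket{0}^R\ket{\ph}^S)$ is then a uniformly random subspace of $AB$ of the prescribed dimension.

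Invoking Lemma \ref{lem:size.subspace}, I obtain that with probability at least $1 - o(1)$ in $|A|$ every state in this random subspace has $A$-marginal whose $p$-Renyi entropy is at least $\log|A| - 2$, establishing Eq.~(\ref{eqn:Hpm.1copy}) for $\cN$. Since $\bar{U}$ is also Haar-distributed, the same bound holds for $\bar{\cN}$, and a union bound shows that both single-copy bounds hold simultaneously with probability approaching one. For the product channel I apply Lemma \ref{lem:big.eigenvalue} deterministically to the maximally entangled input $\ket{\Ph}^{S_1 S_2}$: the output has an eigenvalue at least $|S|/(|A||B|)$, and since the $p$-norm dominates the operator norm, this yields the Renyi entropy upper bound in Eq.~(\ref{eqn:renyi.bound}). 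Substituting $|S| = \Theta(|A|^{3-p})$ and $|A||B| = 2|A|^2$ then produces Eq.~(\ref{eqn:Hpm.2copies}).

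Combining the two, for all sufficiently large $|A|$ there exists a specific $U$ for which both inequalities hold. This $\cN$ is the claimed counterexample, since
\beq
\Hpm(\cN) + \Hpm(\bar{\cN}) - \Hpm(\cN \ox \bar{\cN}) \geq (2-p)\log|A| - \cO(1),
\eeq
which is strictly positive for $|A|$ large and thus inconsistent with Eq.~(\ref{eqn:p.add}), the Renyi-entropy reformulation of the maximal $p$-norm multiplicativity conjecture. The only step requiring genuine care is confirming that the failure probability expression in Lemma \ref{lem:size.subspace} decays with $|A|$ for the chosen $\a = \b = 1$, $|B| = 2|A|$; once this asymptotic estimate is verified, the existence of the counterexample follows immediately, and everything else is assembly from the two lemmas.
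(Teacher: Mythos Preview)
Your proposal is correct and follows essentially the same route as the paper: fix $|B|=2|A|$, $\a=\b=1$, and $|S|$ as in Lemma~\ref{lem:size.subspace}; invoke that lemma probabilistically to get the single-copy lower bound~(\ref{eqn:Hpm.1copy}) for both $\cN$ and $\bar\cN$, and Lemma~\ref{lem:big.eigenvalue} deterministically on the maximally entangled input to get the two-copy upper bound~(\ref{eqn:Hpm.2copies}), then combine. Your explicit mention of the union bound and of extracting a specific $U$ via the probabilistic method are just the natural completions of the paper's ``with probability approaching~1'' phrasing.
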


Note, however, that changing $p$ also requires changing $|S|$
according Lemma \ref{lem:size.subspace}, so we have a sequence of
channels violating additivity of the minimal output Renyi entropy as
$p$ decreases to 1, as opposed to a single channel doing so for
every $p$. This prevents us from drawing conclusions about the von
Neumann entropy by taking the limit $p \rar 1$.

As an aside, it is interesting to observe that violating maximal
$p$-norm multiplicativity has structural consequences for the
channels themselves. For example, because entanglement-breaking
channels do not violate multiplicativity~\cite{King03b}, there must
be states $\ket{\ps}^{S_1 S_2}$ such that $(\cN \ox I^{S_2})(\ps)$
is entangled, despite the fact that $\cN$ will be a rather noisy
channel. (The same conclusions apply to the channels used as
examples by Winter~\cite{W07}, where the conclusion takes the form
that $\e$-randomizing maps need not be entanglement-breaking.)

\section{The von Neumann entropy case} \label{sec:vn.entropy}

Despite the large violations found for $p$ close to 1, the class of
examples presented here do not appear to contradict the minimum
entropy output conjecture for the von Neumann entropy. The reason is
that the upper bound demonstrated for
$H_p\big((\cN\ox\bar\cN)(\Ph)\big)$ in the previous section rested
entirely on the existence of one large eigenvalue for
$(\cN\ox\bar\cN)(\Ph)$. The von Neumann entropy is not as sensitive
to the value of a single eigenvalue as are the Renyi entropies for
$p > 1$ and, consequently, does not appear to exhibit additivity
violations. With a bit of work, it is possible to make these
observations more rigorous.
\begin{lemma} \label{lem:avg.purity}
Let $\ket{\Ph}^{S_1 S_2}$ be a maximally entangled state between
$S_1$ and $S_2$. Assuming that $|A| \leq |B| \leq |S|$,
\begin{equation} \label{eqn:avg.purity}
 \int \Tr\Big[ \big( (\cN \ox \bar{\cN})(\proj{\Ph}) \big)^2 \Big] \, dU
    = \frac{|S|^2}{|A|^2|B|^2} + \cO\left(\frac{1}{|A|^2}\right),
\end{equation}
where ``$dU$'' is the normalized Haar measure on $R \ox S \cong A
\ox B$.
\end{lemma}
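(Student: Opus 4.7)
Let $\sigma=(\cN\ox\bar\cN)(\Ph^{S_1S_2})$. I would compute its expected purity using the swap identity
\begin{equation}
\Tr[\sigma^2]=\Tr\big[(\sigma\ox\sigma)\,F^{A_1A_2}\big],
\end{equation}
where $F^{A_1A_2}$ swaps the two copies of $A_1A_2$ in $\sigma\ox\sigma$. Substituting the Stinespring form (\ref{eqn:the.channel}) on each copy and exchanging expectation with trace reduces the problem to a single Haar integral containing four ``$U$-type'' factors (the two $U$'s from the $\cN$ slots and the two $U^T$'s from the $\bar\cN$ slots) together with four ``$U^*$-type'' factors (the two $U^\dg$'s and the two $U^*$'s), all contracted against $|00\rangle\langle 00|^{R_1R_2}$, two copies of $\Ph^{S_1S_2}$, and $F^{A_1A_2}$. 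Weingarten calculus for the eighth moment then writes this integral as a sum over $(\pi,\tau)\in S_4\times S_4$ weighted by $\Wg(|A||B|,\pi\tau^{-1})$.

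The dominant contribution comes from the pairing that matches each $U$ with the $U^\dg$ from the same $\cN$ and each $U^T$ with the $U^*$ from the same $\bar\cN$, leaving the swap acting trivially; this pairing reproduces the product of ``classical'' contractions used inside Lemma~\ref{lem:big.eigenvalue} twice and produces exactly $|S|^2/|A|^2|B|^2$. Every other $(\pi,\tau)$ either mixes the two copies of $\sigma$ or crosses the $\cN$ slots with the $\bar\cN$ slots, and each such crossing forces at least one extra overlap with $|0\rangle^R$---a factor $1/|R|=|S|/|A||B|\le 1$---while $\Wg(|A||B|,\pi\tau^{-1})=\cO((|A||B|)^{-4-|\pi\tau^{-1}|})$ supplies additional suppression for non-identity cycle type. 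Under the hypothesis $|A|\le|B|\le|S|$, all subleading contributions combine to at most $\cO(1/|A|^2)$.

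The main obstacle is the uniform bookkeeping across the $(4!)^2=576$ Weingarten terms. A useful structural check is that, by $\{V\ox\bar V\}$-covariance of the Haar measure applied to the random channel, $\EE\sigma=\alpha\,I^{A_1A_2}+\beta\,|\Ph\rangle\langle\Ph|^{A_1A_2}$ with $\alpha=\cO(1/|A|^2)$ and $\beta=|S|/|A||B|+\cO(1/|A||B|)$, so that $\Tr[(\EE\sigma)^2]$ already realizes the leading value $|S|^2/|A|^2|B|^2+\cO(1/|A|^2)$. It then suffices to bound the variance $\EE[\Tr(\sigma^2)]-\Tr[(\EE\sigma)^2]$ by $\cO(1/|A|^2)$, which follows either from the remaining Weingarten terms or from L\'{e}vy-type concentration on $\mathrm{U}(|A||B|)$ applied to the Lipschitz map $U\mapsto\sigma(U)$.
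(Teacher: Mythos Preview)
Your strategy is the paper's strategy: both reduce to a fourth-order Weingarten integral (four $U$'s and four $\bar U$'s), identify the identity pairing as the dominant contribution $|S|^2/|A|^2|B|^2$, and bound the remainder. The paper expands in matrix entries and organizes the calculation around a single $S_4$ permutation on the $AB$ indices (its diagrams), then devotes most of the appendix to the ``repeated label'' subtlety---terms in which the $s$-labels collide. In your $S_4\times S_4$ formulation those are exactly the $\sigma\neq e$ contributions on the $RS$ side, so your organization trades the paper's collision analysis for a larger but more uniform sum.

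The one real gap is your suppression heuristic. Crossings do \emph{not} produce factors of $1/|R|$: every $R$-leg is pinned to $|0\rangle$, so any Weingarten contraction among them yields $\langle 0|0\rangle=1$, never $1/|R|$. The suppression of subleading $(\sigma,\tau)$ actually comes from two sources the paper works out explicitly: the Weingarten decay $\Wg(\pi)=\Theta\big((|A||B|)^{-4-|\pi|}\big)$, and the index constraints imposed by $\pi$ on the external sums over $a$'s and $b$'s (e.g.\ a transposition already forces two pairs of output indices to coincide). For instance, the six transposition diagrams contribute $\Theta(|S|^2|A|^{-4}|B|^{-2})$ or smaller, which under $|A|\le|B|\le|S|\le|A||B|$ gives $\cO(1/|A|^2)$. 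You will need to carry out that counting; the $1/|R|$-per-crossing shortcut does not reproduce it.

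Your covariance-plus-concentration alternative is a genuinely different route the paper does not take. The structural claim $\EE\sigma=\alpha I+\beta\,\Phi^{A_1A_2}$ is correct (from $V_A\otimes\bar V_A$ covariance under $U\mapsto (V_A\otimes V_B)U$), and computing $\alpha,\beta$ is only a second-moment Weingarten calculation. But the L\'evy step is underspecified: you would need a Lipschitz bound for $U\mapsto\sigma(U)$ in Hilbert--Schmidt norm (or for $U\mapsto\Tr\sigma(U)^2$), and then to integrate the resulting tail to control $\EE\|\sigma-\EE\sigma\|_2^2$. This can be done, but it is not automatic, and once you have the fourth-moment Weingarten machinery in hand it is no easier than bounding the remaining terms directly.
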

\begin{figure}
    \centering
    \includegraphics[scale=0.85]{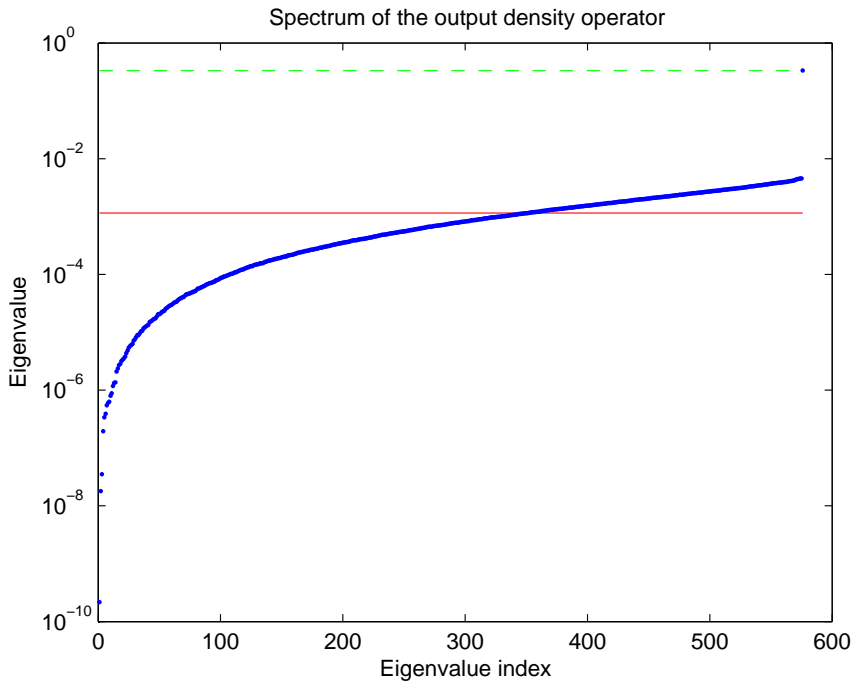}
    \caption{Typical eigenvalue spectrum of $(\cN \ox \bar\cN)(\Ph)$ when $|R|=3$ and
    $|A|=|B|=24$. The eigenvalues are plotted in increasing order
    from left to right. The green dashed line corresponds to
    $|S|/(|A||B|) = 1/3$, which is essentially equal to the largest
    eigenvalue. The red solid line represents the value
    $(1 - \smfrac{|S|}{|A||B|}) / |A|^2 = 1/864$. If the density operator
    were maximally mixed aside from its largest eigenvalue, all but
    that one eigenvalue would fall on this line. While that is not
    the case here or in general, the remaining eigenvalues are nonetheless sufficiently
    small to ensure that the density operator has high von Neumann
    entropy.}
    \label{fig:eigenspectrum}
\end{figure}
A description of the calculation can be found in Appendix
\ref{appendix:purity}. Let the eigenvalues of $(\cN\ox\bar\cN)(\Ph)$
be equal to $\l_1 \geq \l_2 \geq \cdots \geq \l_{|A|^2}$. For a
typical $U$, Lemmas \ref{lem:big.eigenvalue} and
\ref{lem:avg.purity} together imply that
\begin{equation}
\sum_{j > 1} \l_j^2 = \cO\left(\frac{1}{|A|^2}\right).
\end{equation}
Thus, aside from $\l_1$, the eigenvalues $\l_j$ must be quite small.
A typical eigenvalue distribution is plotted in Figure
\ref{fig:eigenspectrum}. If we define $\tilde\l_j = \l_j / (1 -
\l_1)$, then $\sum_{j>1} \tilde\l_j = 1$ and
\begin{equation}
H_1(\tilde\l)
 \geq H_2(\tilde\l)
 = - \ln \sum_{j>1} \tilde\l_j^2
 = 2 \ln |A| - \cO(1).
\end{equation}
An application of the grouping property then gives us a good lower
bound on the von Neumann entropy:
\begin{equation}
H_1\big((\cN\ox\bar\cN)(\Ph)\big)
 = H_1(\l)
 = h( \l_1 ) + (1 - \l_1) H_1 ( \tilde\l )
 = 2 \ln |A| - \cO(1),
\end{equation}
where $h$ is the binary entropy function. This entropy is nearly as
large as it can be and, in particular, as large as $\Hm( \cN ) +
\Hm(\bar\cN)$ according Theorem IV.1 of \cite{HLW06}, the von
Neumann entropy version of Lemma \ref{lem:size.subspace}.

\section{Discussion} \label{sec:discussion}

The counterexamples presented here demonstrate that the maximal
$p$-norm multiplicativity conjecture is false for $1 < p < 2$. The
primary motivation for studying this conjecture was that it is a
natural strengthening of the minimum entropy output conjecture,
which is of fundamental importance in quantum information theory. In
particular, since the multiplicativity conjecture was formulated,
most attempts to prove the minimum entropy output conjecture for
special cases actually proved maximal $p$-norm multiplicativity and
then took the limit as $p$ decreases to 1. This strategy, we now
know, cannot be used to prove the conjecture in general.

From that perspective, it would seem that the results in this paper
cast doubt on the validity of the minimum entropy output conjecture
itself. However, as we have shown, the examples explored here appear
to be completely consistent with the conjecture, precisely because
the von Neumann entropy is more difficult to perturb than the Renyi
entropies of order $p > 1$. Indeed, the message of this paper may be
that attempts to prove the minimum entropy output conjecture have
all been approaching $p=1$ from the wrong direction. It is quite
possible that additivity of the minimum output Renyi entropy holds
for $0 \leq p \leq 1$ and then fails dramatically for $p > 1$.

This is not, unfortunately, a very well-informed speculation. With
few exceptions~\cite{WolfE05}, there has been very little research
on the additivity question in the regime $p < 1$, even though many
arguments can be easily adapted to this parameter region.
(Eq.~(\ref{eqn:king}), for example, holds for all $0 < p$.)
Remedying this oversight would now seem to be a priority.

{\bf Note:} In the short time since I circulated a preliminary
version of this manuscript, there has already been substantial
further progress. Most notably, Andreas Winter has observed that the
Lipschitz bound in Lemma \ref{lem:size.subspace} is not tight. By
improving it, he managed to disprove the $p$-norm multiplicativity
conjecture at its final redoubt of $p=2$ and simultaneously
demonstrate that the large violation observed here near $p=1$ can be
found for all $p
> 1$. Also, Fr\'ed\'eric Dupuis confirmed that replacing the unitary
group by the orthogonal group leads to qualitatively similar
conclusions. Those and other developments will appear
in~\cite{DHLW07}.

\subsection*{Acknowledgments}
I would like to thank Fr\'ed\'eric Dupuis and Debbie Leung for an
inspiring late-night conversation at the Perimeter Institute,
Andreas Winter for the timely determination that another proposed
class of counterexamples was faulty, Aram Harrow for several
insightful suggestions, and Mary Beth Ruskai for discussions on the
additivity conjecture. I'd also like thank BIRS for their
hospitality during the Operator Structures in Quantum Information
workshop, which rekindled my interest in the additivity problem.
This research was supported by the Canada Research Chairs program, a
Sloan Research Fellowship, CIFAR, FQRNT, MITACS and NSERC.

\appendix

\section{Proof of Lemma \ref{lem:avg.purity}} \label{appendix:purity}

We will estimate the integral, in what is perhaps not the most
illuminating way, by expressing it in terms of the matrix entries of
$U$. Let $U_{s,ab} = \,^{R} \bra{0}^{S}\bra{s} U \ket{a}^A
\ket{b}^B$. Expanding gives
\begin{align} \label{eqn:big.sum}
\int \Tr\Big[ & \big( (\cN \ox \bar{\cN})(\proj{\Ph}) \big)^2 \Big] \, dU \\
 &= \frac{1}{|S|^2}
    \sum_{\stackrel{a_1,a_2}{a_1',a_2'}}
    \sum_{\stackrel{b_1,b_2}{b_1',b_2'}}
    \sum_{\stackrel{s_1,s_2}{s_1',s_2'}}
    \int
    \bar U_{s_1,a_2 b_2} \bar U_{s_2,a_1' b_1}
            \bar U_{s_1',a_2' b_2'} \bar U_{s_2',a_1 b_1'}
            U_{s_1,a_1 b_1} U_{s_2,a_2' b_2}
            U_{s_1',a_1' b_1'} U_{s_2',a_2 b_2'} \, dU. \nonumber
\end{align}
Following \cite{AL03,AL04}, the non-zero terms in the sum can be
represented using a simple graphical notation. Make two parallel
columns of four dots, then label the left-hand dots by the indices
$(s_1, s_2, s_1', s_2')$ and the right-hand dots by the indices
$\vec v = ( a_2 b_2, a_1'b_1, a_2'b_2',a_1b_1')$. Join dots with a
solid line if the corresponding $\bar{U}$ matrix entry appears in
Eq.~(\ref{eqn:big.sum}). Since terms integrate to a non-zero value
only if the vector of $U$ indices $\vec w =
(a_1b_1,a_2'b_2,a_1'b_1',a_2b_2')$ is a permutation of the vector of
$\bar U$ indices, a non-zero integral can be represented by using a
dotted line to connect left-hand and right-hand dots whenever the
corresponding $U$ matrix entry appears in the integral.

Assuming for the moment that the vertex labels in the left column
are all distinct and likewise for the right column, the integral
evaluates to the Weingarten function $\Wg(\pi)$, where $\pi$ is the
permutation such that $w_i = v_{\pi(i)}$. For the rough estimate
required here, it is sufficient to know that $\Wg(\pi) = \Theta\big(
(|A||B|)^{-4-|\pi|} \big)$, where $|\pi|$ is the minimal number of
factors required to write $\pi$ as a product of transpositions, and
that $\Wg(e) = (|A||B|)^{-4}\big( 1 +
\cO(|A|^{-2}|B|^{-2})\big)$~\cite{CS06}.

The dominant contribution to Eq.~(\ref{eqn:big.sum}) comes from the
``stack'' diagram
\[
\xy
 (0,0)*{\bullet}="l4";  (0,7)*{\bullet}="l3";
 (0,14)*{\bullet}="l2"; (0,21)*{\bullet}="l1";
 (13,0)*{\bullet}="r4"; (13,7)*{\bullet}="r3";
 (13,14)*{\bullet}="r2";(13,21)*{\bullet}="r1";
 "l1" ; "r1" **\dir{-}; "l2" ; "r2" **\dir{-};
 "l3" ; "r3" **\dir{-}; "l4" ; "r4" **\dir{-};
 (-4,21)*{s_1};  (-4,14)*{s_2};
 (-4,7)*{s_1'}; (-4,0)*{s_2'};
 (25,21)*{a_2 b_2 = a_1 b_1};
 (25,14)*{a_1' b_1 = a_2' b_2};
 (25,7)*{a_2' b_2' = a_1' b_1'};
 (25,0)*{a_1 b_1' = a_2 b_2',};
\endxy
\]
in which the solid and dashed lines are parallel and for which the
contribution is positive and approximately equal to
\begin{equation} \label{eqn:dominant}
 \frac{1}{|S|^2}
    \sum_{\stackrel{a_1,a_2}{a_1',a_2'}}
    \sum_{\stackrel{b_1,b_2}{b_1',b_2'}}
    \sum_{\stackrel{s_1,s_2}{s_1',s_2'}}
    \d_{a_1a_2}\d_{b_1b_2}\d_{a_1'a_2'}\d_{b_1'b_2'}
    \Wg(\id)
 = \frac{|S|^2}{|A|^2|B|^2}\left( 1 + \cO(|A|^{-2}|B|^{-2} ) \right).
\end{equation}
(The expression on the left-hand side would be exact but for the
terms in which vertex labels are not distinct.) To obtain an
estimate of Eq.~(\ref{eqn:big.sum}), it is then sufficient to
examine the other terms and confirm that they are all of smaller
asymptotic order than this. There are six diagrams representing
transpositions, and their associated (negative) contributions are
\[
\xy
 (0,0)*{\xy
     (0,0)*{\bullet}="l4";  (0,7)*{\bullet}="l3";
     (0,14)*{\bullet}="l2"; (0,21)*{\bullet}="l1";
     (13,0)*{\bullet}="r4"; (13,7)*{\bullet}="r3";
     (13,14)*{\bullet}="r2";(13,21)*{\bullet}="r1";
     "l1" ; "r1" **\dir{-}; "l2" ; "r2" **\dir{-};
     "l3" ; "r3" **\dir{-}; "l4" ; "r4" **\dir{-};
     (-4,21)*{s_1};  (-4,14)*{s_2};
     (-4,7)*{s_1'}; (-4,0)*{s_2'};
     (25,21)*{a_2 b_2 = a_1 b_1};
     (25,14)*{a_1' b_1 = a_2' b_2};
     (25,7)*{a_2' b_2' = a_2 b_2'};
     (25,0)*{a_1 b_1' = a_1' b_1'};
     "l3"; "r4" **\dir{--};
     "l4"; "r3" **\dir{--};
     (15,-7)*{\Theta( |S|^2|A|^{-4}|B|^{-2})}
     \endxy};
 (50,0)*{\xy
     (0,0)*{\bullet}="l4";  (0,7)*{\bullet}="l3";
     (0,14)*{\bullet}="l2"; (0,21)*{\bullet}="l1";
     (13,0)*{\bullet}="r4"; (13,7)*{\bullet}="r3";
     (13,14)*{\bullet}="r2";(13,21)*{\bullet}="r1";
     "l1" ; "r1" **\dir{-}; "l2" ; "r2" **\dir{-};
     "l3" ; "r3" **\dir{-}; "l4" ; "r4" **\dir{-};
     (-4,21)*{s_1};  (-4,14)*{s_2};
     (-4,7)*{s_1'}; (-4,0)*{s_2'};
     (25,21)*{a_2 b_2 = a_1 b_1};
     (25,14)*{a_1' b_1 = a_2 b_2' };
     (25,7)*{a_2' b_2' = a_1' b_1'};
     (25,0)*{a_1 b_1' = a_2' b_2};
     "l2"; "r4" **\dir{--};
     "l4"; "r2" **\dir{--};
     (15,-7)*{\Theta( |S|^2|A|^{-4}|B|^{-4})}
     \endxy};
 (100,0)*{\xy
     (0,0)*{\bullet}="l4";  (0,7)*{\bullet}="l3";
     (0,14)*{\bullet}="l2"; (0,21)*{\bullet}="l1";
     (13,0)*{\bullet}="r4"; (13,7)*{\bullet}="r3";
     (13,14)*{\bullet}="r2";(13,21)*{\bullet}="r1";
     "l1" ; "r1" **\dir{-}; "l2" ; "r2" **\dir{-};
     "l3" ; "r3" **\dir{-}; "l4" ; "r4" **\dir{-};
     (-4,21)*{s_1};  (-4,14)*{s_2};
     (-4,7)*{s_1'}; (-4,0)*{s_2'};
     (25,21)*{a_2 b_2 = a_2 b_2'};
     (25,14)*{a_1' b_1 = a_2' b_2};
     (25,7)*{a_2' b_2' = a_1' b_1'};
     (25,0)*{a_1 b_1' = a_1 b_1,};
     "l1"; "r4" **\dir{--};
     "l4"; "r1" **\dir{--};
     (15,-7)*{\Theta( |S|^2|A|^{-2}|B|^{-4})}
     \endxy}
\endxy
\]
\[
\xy
 (0,0)*{\xy
     (0,0)*{\bullet}="l4";  (0,7)*{\bullet}="l3";
     (0,14)*{\bullet}="l2"; (0,21)*{\bullet}="l1";
     (13,0)*{\bullet}="r4"; (13,7)*{\bullet}="r3";
     (13,14)*{\bullet}="r2";(13,21)*{\bullet}="r1";
     "l1" ; "r1" **\dir{-}; "l2" ; "r2" **\dir{-};
     "l3" ; "r3" **\dir{-}; "l4" ; "r4" **\dir{-};
     (-4,21)*{s_1};  (-4,14)*{s_2};
     (-4,7)*{s_1'}; (-4,0)*{s_2'};
     (25,21)*{a_2 b_2 = a_1 b_1};
     (25,14)*{a_1' b_1 = a_1' b_1'};
     (25,7)*{a_2' b_2' = a_2' b_2};
     (25,0)*{a_1 b_1' = a_2 b_2'};
     "l3"; "r2" **\dir{--};
     "l2"; "r3" **\dir{--};
     (15,-7)*{\Theta( |S|^2|A|^{-2}|B|^{-4})}
     \endxy};
 (50,0)*{\xy
     (0,0)*{\bullet}="l4";  (0,7)*{\bullet}="l3";
     (0,14)*{\bullet}="l2"; (0,21)*{\bullet}="l1";
     (13,0)*{\bullet}="r4"; (13,7)*{\bullet}="r3";
     (13,14)*{\bullet}="r2";(13,21)*{\bullet}="r1";
     "l1" ; "r1" **\dir{-}; "l2" ; "r2" **\dir{-};
     "l3" ; "r3" **\dir{-}; "l4" ; "r4" **\dir{-};
     (-4,21)*{s_1};  (-4,14)*{s_2};
     (-4,7)*{s_1'}; (-4,0)*{s_2'};
     (25,21)*{a_2 b_2 = a_1' b_1'};
     (25,14)*{a_1' b_1 = a_2' b_2 };
     (25,7)*{a_2' b_2' = a_1 b_1};
     (25,0)*{a_1 b_1' = a_2 b_2'};
     "l1"; "r3" **\dir{--};
     "l3"; "r1" **\dir{--};
     (15,-7)*{\Theta( |S|^2|A|^{-4}|B|^{-4})}
     \endxy};
 (100,0)*{\xy
     (0,0)*{\bullet}="l4";  (0,7)*{\bullet}="l3";
     (0,14)*{\bullet}="l2"; (0,21)*{\bullet}="l1";
     (13,0)*{\bullet}="r4"; (13,7)*{\bullet}="r3";
     (13,14)*{\bullet}="r2";(13,21)*{\bullet}="r1";
     "l1" ; "r1" **\dir{-}; "l2" ; "r2" **\dir{-};
     "l3" ; "r3" **\dir{-}; "l4" ; "r4" **\dir{-};
     (-4,21)*{s_1};  (-4,14)*{s_2};
     (-4,7)*{s_1'}; (-4,0)*{s_2'};
     (25,21)*{a_2 b_2 = a_2' b_2};
     (25,14)*{a_1' b_1 = a_1 b_1};
     (25,7)*{a_2' b_2' = a_1' b_1'};
     (25,0)*{a_1 b_1' = a_2 b_2'};
     "l1"; "r2" **\dir{--};
     "l2"; "r1" **\dir{--};
     (15,-7)*{\Theta( |S|^2|A|^{-4}|B|^{-2}).}
     \endxy}
\endxy
\]
For permutations $\pi$ such that $|\pi| > 1$, the Weingarten
function is significantly suppressed: $\Wg(\pi) = \cO( |A|^{-6}
|B|^{-6})$. Moreover, for a given diagram type, the requirement that
$w_i = v_{\pi(i)}$ can only hold if at least two pairs of the
indices $a_1,a_2,a_1',a_2',b_1,b_2,b_1',b_2'$ are identical. The
contribution from such diagrams is therefore
$\cO(|S|^2|A|^{-4}|B|^{-2})$.

To finish the proof, it is necessary to consider integrals in which
the vertex labels on the left- or the right-hand side of a diagram
are not all distinct. In this more general case, choosing a set
$\cC$ of representatives for the conjugacy classes of the
permutation group on four elements, the value of the integral can be
written
\begin{equation} \label{eqn:cycle.decomp}
 \sum_{c \in \cC} N(c) \Wg(c),
\end{equation}
where
\begin{equation}
 N(c) = \sum_{\stackrel{\s \in \cS_4:}{\vec v = \s(\vec v)}}
        \sum_{\stackrel{\tau \in \cS_4:}{\vec w = \tau(\vec w)}}
        \d( \tau \pi \s \in c ).
\end{equation}
These formulas have a simple interpretation. Symmetry in the vertex
labels introduces ambiguities in the diagrammatic notation; the
formula states that every one of the diagrams consistent with a
given vertex label set must be counted, and with a defined
dimension-independent multiplicity. Conveniently, our crude
estimates have already done exactly that, ignoring the
multiplicities. The only case for which we need to know the
multiplicities, moreover, is for contributions to the dominant term,
which we want to know exactly and not just up to a constant
multiple.

We claim that in the sum (\ref{eqn:big.sum}) there are at most
$\cO(|S|^4|A||B|^3)$ terms with vertex label symmetry. The total
contribution for terms with vertex label symmetries $\tau$ and $\s$
in which $|\tau \pi \s| \geq 1$ is therefore of size
$\cO(|S|^2|A|^{-4}|B|^{-2})$ and does not affect the dominant term.
To see why the claim holds, fix a diagram type and recall that the
requirement $w_i = v_{\pi(i)}$ for a permutation $\pi$ can only hold
if at least two pairs of the indices
$a_1,a_2,a_1',a_2',b_1,b_2,b_1',b_2'$ are identical. Equality is
achieved only when all the $A$ indices or all the $B$ indices are
aligned, corresponding to the following two diagrams:
\[
\xy
 (0,0)*{\xy
     (0,0)*{\bullet}="l4";  (0,7)*{\bullet}="l3";
     (0,14)*{\bullet}="l2"; (0,21)*{\bullet}="l1";
     (13,0)*{\bullet}="r4"; (13,7)*{\bullet}="r3";
     (13,14)*{\bullet}="r2";(13,21)*{\bullet}="r1";
     "l1" ; "r1" **\dir{-}; "l2" ; "r2" **\dir{-};
     "l3" ; "r3" **\dir{-}; "l4" ; "r4" **\dir{-};
     (-4,21)*{s_1};  (-4,14)*{s_2};
     (-4,7)*{s_1'}; (-4,0)*{s_2'};
     (25,21)*{a_2 b_2 = a_2' b_2};
     (25,14)*{a_1' b_1 = a_1 b_1};
     (25,7)*{a_2' b_2' = a_2 b_2'};
     (25,0)*{a_1 b_1' = a_1' b_1'};
     "l1"; "r2" **\dir{--};
     "l2"; "r1" **\dir{--};
     "l3"; "r4" **\dir{--};
     "l4"; "r3" **\dir{--};
     \endxy};
 (50,0)*{\xy
     (0,0)*{\bullet}="l4";  (0,7)*{\bullet}="l3";
     (0,14)*{\bullet}="l2"; (0,21)*{\bullet}="l1";
     (13,0)*{\bullet}="r4"; (13,7)*{\bullet}="r3";
     (13,14)*{\bullet}="r2";(13,21)*{\bullet}="r1";
     "l1" ; "r1" **\dir{-}; "l2" ; "r2" **\dir{-};
     "l3" ; "r3" **\dir{-}; "l4" ; "r4" **\dir{-};
     (-4,21)*{s_1};  (-4,14)*{s_2};
     (-4,7)*{s_1'}; (-4,0)*{s_2'};
     (25,21)*{a_2 b_2 = a_2 b_2'};
     (25,14)*{a_1' b_1 = a_1' b_1' };
     (25,7)*{a_2' b_2' = a_2' b_2};
     (25,0)*{a_1 b_1' = a_1 b_1};
     "l1"; "r4" **\dir{--};
     "l4"; "r1" **\dir{--};
     "l2"; "r3" **\dir{--};
     "l3"; "r2" **\dir{--};
     \endxy};
\endxy
\]
For the first diagram, using the fact that $|A|\leq|B|\leq|S|$, it
is easy to check that imposing the extra constraint that either the
top or bottom two $S$ or $AB$ vertex labels match singles at most
$\cO(|S|^4|A||B|^3)$ terms from Eq.~(\ref{eqn:big.sum}). Similar
reasoning applies to the second diagram, but imposing the constraint
instead on rows one and four, or two and three. For all other
diagram types, at least four pairs of the indices
$a_1,a_2,a_1',a_2',b_1,b_2,b_1',b_2'$ are identical. (The number of
matching $A$ and $B$ indices is necessarily even.) In a term for
which the vertex labels are not all distinct, either a pair of $S$
indices or a further pair of $A$ or $B$ indices must be identical.
In the latter case, there must exist an identical $A$ pair
\emph{and} an identical $B$ pair among all the pairs. Again using
$|A| \leq |B| \leq |S|$, there can be at most $\cO(|S|^4|B|^3)$ such
terms per diagram type, which demonstrates the claim.

We are thus left to consider integrals with vertex label symmetry
and $N(e) \neq 0$ in Eq.~(\ref{eqn:cycle.decomp}). If $N(e) = 1$,
then our counting was correct and there is no problem. It is
therefore sufficient to bound the number of integrals in which $N(e)
> 1$. This can occur only in terms with at least 2 vertex label
symmetries. Running the argument of the previous paragraph again,
for the two diagrams with $A$ or $B$ indices all aligned, this
occurs in at most $\cO(|S|^4|B|^2)$ terms. For the rest of the
cases, it is necessary to impose equality on yet another pair of
indices, leading again to at most $\cO(|S|^4|B|^2)$ terms. Since
$\Wg(e) = \cO(|A|^{-4}|B|^{-4})$, these contributions are
collectively $\cO(|S|^2|A|^{-4}|B|^{-2})$.

The bound on the error term in Eq.~(\ref{eqn:avg.purity}) arises by
substituting the inequalities $|S| \leq |A||B|$ and $|A| \leq |B|$
into each of the estimates calculated above.

\bibliographystyle{unsrt}
\bibliography{add}

\end{document}